\newtheorem{definition}{Definition}
\newtheorem{theorem}{Theorem}
\newtheorem{corollary}{Corollary}
\newtheorem{lemma}{Lemma}
\newcommand{\stitle}[1]{\vspace{2mm} \noindent{\bf #1}}
\newcommand{\dom}{\ensuremath{{\cal T}}}
\renewcommand{\v}{\ensuremath{\mathbf{v}}}
\newcommand{\eat}[1]{}
\newcommand{\todo}[1]{{\color{red} #1}}
\title{On the Privacy Properties of Variants on the Sparse Vector Technique}
\author{
            \alignauthor Yan Chen\\
            \affaddr{Duke University} 
                \email{yanchen@cs.duke.edu}
            \alignauthor Ashwin Machanavajjhala \\
            \affaddr{Duke University} 
                \email{ashwin@cs.duke.edu}
}
\begin{document}
\maketitle

\begin{abstract}
The sparse vector technique is a powerful differentially private primitive that allows an analyst to check whether queries in a stream are greater or lesser than a threshold. This technique has a unique property --  the algorithm works by adding noise with a finite variance to the queries and the threshold, and guarantees privacy that only degrades with (a) the maximum sensitivity of any one query in stream, and (b) the number of positive answers output by the algorithm. Recent work has developed variants of this algorithm, which we call {\em generalized private threshold testing}, and are claimed to have privacy guarantees that do not depend on the number of positive or negative answers output by the algorithm. These algorithms result in a significant improvement in utility over the sparse vector technique for a given privacy budget, and have found applications in frequent itemset mining, feature selection in machine learning and generating synthetic data. 

In this paper we critically analyze the privacy properties of generalized private threshold testing. We show that  generalized private threshold testing does not satisfy $\epsilon$-differential privacy  for any finite $\epsilon$. We identify a subtle error in the privacy analysis of this technique in prior work. Moreover, we show an adversary can use generalized private threshold testing to recover counts from the datasets (especially small counts) exactly with high accuracy, and thus can result in individuals being reidentified. We demonstrate our attacks empirically on real datasets. 
\end{abstract}

\section{Introduction}\label{sec:intro}

A popular building block for $\epsilon$-differentially private query answering is the Laplace mechanism. Given a set of queries ${\cal Q}$ as input, the Laplace mechanism adds noise drawn independently from the Laplace distribution to each query in ${\cal Q}$. Adding noise with standard deviation of $\sqrt{2}/\epsilon$ to each of the queries in ${\cal Q}$ ensures $(\Delta_{\cal Q} \cdot \epsilon)$-differential privacy, where $\Delta_{\cal Q}$ is the {\em sensitivity} of ${\cal Q}$, or the sum of the changes in each of the queries $Q \in {\cal Q}$ when one row is added or removed from the input database. Increasing the number of queries increases the sensitivity, and thus for a fixed privacy budget the  mechanism's accuracy is poor for large sets of queries (unless the queries operate on disjoint subsets of the domain). 


The {\em sparse vector technique} (SVT) \cite{SVT} is an algorithm that allows testing whether a stream of queries is greater or lesser than a threshold $\tau$. SVT works by adding noise to both the threshold $\tau$ and to each of the queries $Q \in {\cal Q}$. If noise with standard deviation of $\sqrt{2}/\epsilon$ is added to the threshold and each of the queries, SVT can be shown to satisfy $c\Delta \epsilon$-differential privacy, where $\Delta$ is the maximum sensitivity of any single query in ${\cal Q}$ and $c$ is the number of positive answers (greater than threshold) that the algorithm outputs. Note that the privacy guarantee does not depend on the number of queries with negative answers, and the sensitivity does not necessarily increase with an increase in number of queries.

Recent work has explored the possibility of extending this technique to eliminate the dependence on the number of positive answers ($c$).  We call this idea {\em generalized private threshold testing}, and it works like SVT -- noise is added to both the threshold and each of the queries using noise whose standard deviation only depends only on $\epsilon$ and maximum sensitivity $\Delta$ of a single query in ${\cal Q}$.  Generalized private threshold testing has been claimed to ensure differential privacy with the privacy parameter having no dependence on the number of positive or negative queries! Hence, generalized private threshold testing has been used to develop algorithms with high utility for  private frequent itemset mining \cite{NoiseCut}, feature selection in private classification \cite{EML} and generating synthetic data \cite{KDD15}.

In this article, we critically analyze the privacy properties of generalized private threshold testing. We make the following contributions: 
\begin{itemize}
\item We show that generalized private threshold testing does not satisfy $\epsilon$-differential privacy, where $\epsilon$ does not depend on the number of queries being tested. We identify a specific claim in the privacy analysis in prior work that is assumed to hold, but does not in reality.
\item We show specific examples of neighboring databases, queries and outputs that violate the requirement that the algorithm output is insensitive to adding or removing a row in the data. 
\item We display an attack algorithm and demonstrate that using generalized private threshold testing could make it possible for adversaries to reconstruct the counts for each cell with high probability, especially the cells with small counts.
\end{itemize}

\noindent{\bf Organization:}
Section~\ref{sec:prelim} surveys concepts on differential privacy and the sparse vector technique. In Section~\ref{sec:GPTT}, we introduce generalized private threshold testing  and its instantiations in prior works. We show that generalized private threshold testing  does not satisfy differential privacy in Section~\ref {sec:nonprivate}. We describe an attack algorithm for reconstructing the counts of cells in the input datasets by using generalized private threshold testing in Section~\ref{sec:recon} and demonstrate our attacks on real datasets.


\section{Preliminaries}\label{sec:prelim}
\stitle{Databases:}
A database $D$ is a multiset of entries whose values come from a domain $\dom = \{u_1, u_2, \ldots, u_k\}$. Let $n = |D|$ denote the number of entries in the database. We represent a database $D$ as a histogram of counts over the domain. That is, $D$ is represented as a vector $\mathbf{x} \in \mathbb{N}^k$ where $\mathbf{x}[i]$ or $x_i$ denotes the true count of entries in $D$ with the $i^{\rm th}$ value of the domain $\dom$.

\stitle{Differential Privacy:}
We define a neighborhood relation $N$ on databases as follows: Two databases $D_{1}$ and $D_{2}$  are considered neighboring datasets if and only if they differ in the presence or absence of a single entry. That is, $(D_1, D_2) \in N$ iff for some $t\in \dom$, $D_{1} = D_{2} \cup \{t\}$ or $D_{2} = D_{1} \cup \{t\}$. Equivalently, if $\mathbf{x}_1$ and $\mathbf{x}_2$ are histograms of neighboring databases, $||\mathbf{x}_1 - \mathbf{x}_2||_1 = 1$. An algorithm satisfies differential privacy if its outputs are statistically similar on neighboring databases. 

\begin{definition}[$\epsilon$-differential privacy]
A randomized algorithm $\mathcal{M}$ satisfies $\epsilon$-differential privacy if for any pair of neighboring databases $(D_1, D_2) \in N$, and $\forall S \in$ range($\mathcal{M}$),
\begin{equation}
Pr[\mathcal{M}(D_{1}) = S] \leq e^{\epsilon} \cdot P[\mathcal{M}(D_{2}) = S]
\end{equation}
\end{definition}

The value of $\epsilon$, called \emph{privacy budget}, controls the level of the privacy, and limits how much an adversary can distinguish one dataset with its neighboring datasets given an output. Smaller $\epsilon$'s correspond to more privacy. 

Differentially private algorithms satisfy the following {\em composition} properties. Suppose $M_1(\cdot)$ and $M_2(\cdot)$ be $\epsilon_1$- and $\epsilon_2$-differentially private algorithms.
\begin{itemize}
\item {\em Sequential Compositon:} Releasing the outputs of $M_1(D)$ and $M_2(D)$ satisfies $\epsilon_1+\epsilon_2$-differential privacy. 
\item {\em Parallel Composition:} Releasing $M_1(D_1)$ and $M_2(D_2)$, where $D_1 \cap D_2 = \emptyset$ satisfies $\max(\epsilon_1, \epsilon_2)$-differential privacy.
\item {\em Postprocessing}: For any algorithm $M_3(\cdot)$, releasing $M_3(M_1(D))$ still satisfies $\epsilon_1$-differential privacy. That is, postprocessing an output of a differentially private algorithm does not incur any additional loss of privacy.
\end{itemize}

Thus, complex differentially private algorithms can be build by composing simpler private algorithms. \emph{Laplace Mechanism} \cite{DP} is one such widely used building that achieves differential privacy that adds noise from a Laplace distribution with a scale proportional to the \emph{global sensitivity}.

\begin{definition}[Global Sensitivity]
The global sensitivity of a fuction $f:{\cal D} \rightarrow \mathbb{R}^{n}$, denoted as $\Delta(f)$, is defined to be the maximum $L_{1}$ distance of the output from any two neighboring datasets $D_{1}$ and $D_{2}$.
\begin{equation}
\Delta(f) = \max_{(D_{1},D_{2}) \in N} ||f(D_{1}) - f(D_{2})||_{1}
\end{equation} 
\end{definition}

\begin{definition}[Laplace Mechanism]
For any function $f:{\cal D} \rightarrow \mathbb{R}^{d}$, the Laplace Mechanism $\mathcal{M}$ is given by: $\mathcal{M}$(D) = f(D) + $\eta$. $\eta$ is a vector of independent random variables drawn from a Laplace distribution with the probability density fuction $p(x| \lambda) = \frac{1}{2\lambda}e^{-|x|/\lambda}$, where $\lambda = \Delta(f) / \epsilon$.
\end{definition}

\begin{theorem}
Laplace Mechanism satisfies $\epsilon$-differential privacy. 
\end{theorem}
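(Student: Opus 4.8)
The plan is to work directly with the probability density functions of the output distributions and bound their ratio pointwise using the triangle inequality and the definition of global sensitivity. Fix a pair of neighboring databases $(D_1, D_2) \in N$ and write $f(D_1), f(D_2) \in \mathbb{R}^d$. Since $\mathcal{M}(D) = f(D) + \eta$ with $\eta$ a vector of $d$ independent $\mathrm{Lap}(\lambda)$ variables, $\lambda = \Delta(f)/\epsilon$, the output $\mathcal{M}(D_1)$ has density $p_1(z) = \prod_{i=1}^{d} \frac{1}{2\lambda} e^{-|z_i - f(D_1)_i|/\lambda}$ at any point $z \in \mathbb{R}^d$, and similarly $p_2$ for $D_2$.

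First I would form the ratio $p_1(z)/p_2(z)$ and simplify it to $\exp\!\left(\frac{1}{\lambda}\sum_{i=1}^{d}\bigl(|z_i - f(D_2)_i| - |z_i - f(D_1)_i|\bigr)\right)$. Next I would apply the reverse triangle inequality coordinatewise, $|z_i - f(D_2)_i| - |z_i - f(D_1)_i| \le |f(D_1)_i - f(D_2)_i|$, so that the exponent is at most $\frac{1}{\lambda}\sum_{i=1}^{d} |f(D_1)_i - f(D_2)_i| = \frac{1}{\lambda}\,\|f(D_1) - f(D_2)\|_1$. By the definition of global sensitivity this is at most $\Delta(f)/\lambda = \epsilon$, giving $p_1(z) \le e^{\epsilon} p_2(z)$ for every $z$.

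Finally I would lift this pointwise density bound to the set-level statement of Definition~1: for any measurable $S \subseteq \mathrm{range}(\mathcal{M})$, integrate the inequality $p_1(z) \le e^{\epsilon} p_2(z)$ over $S$ to obtain $\Pr[\mathcal{M}(D_1) \in S] \le e^{\epsilon}\,\Pr[\mathcal{M}(D_2) \in S]$, which is exactly $\epsilon$-differential privacy. (If one prefers the discrete reading of ``$\mathcal{M}(D_1) = S$'' literally as in the definition, the same chain of inequalities applies with densities replaced by pointwise likelihoods.)

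There is no real obstacle here; the only place demanding care is making sure the triangle-inequality step is applied in the right direction and summed across all $d$ coordinates so that the $\ell_1$ norm — and hence $\Delta(f)$ — appears, rather than a per-coordinate bound. The choice $\lambda = \Delta(f)/\epsilon$ is then exactly what makes the accumulated exponent collapse to $\epsilon$.
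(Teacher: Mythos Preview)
Your proof is correct and is the standard argument for this result. Note, however, that the paper does not actually supply a proof of this theorem: it is stated in the Preliminaries as a known background fact (with citation to the original differential privacy paper) and left without proof. So there is no ``paper's own proof'' to compare against; your argument is simply the canonical one that would appear in any textbook treatment.
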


\stitle{Sparse Vector Technique}:

\begin{algorithm}[t]
\caption{Sparse Vector Technique}
{\bf Input}: Dataset $D$,  a stream of queries $q_1,q_2, \dots$ with bounded sensitivity $\Delta$, threshold $\theta$, a cutoff point $c$ and privacy budget $\epsilon$\\
{\bf Output}: a stream of answers
\begin{algorithmic}[1]
	\State $\tilde{\theta} \gets \theta + Lap(2\Delta/\epsilon)$, $count \gets 0$
	\For{each query $i$}
		\State $v_i \gets Lap(2\Delta*c/\epsilon)$
		\If {$q_i(D) + v_i \ge \tilde{\theta}$}
			\State Output $v_i = \top$
			\State $count \gets count+1$
		\Else
			\State Output $v_i = \bot$
		\EndIf
		\If {$count \ge c$}
			\State Abort
		\EndIf
	\EndFor
\end{algorithmic}
\label{alg:SVT}
\end{algorithm}

Algorithm~\ref{alg:SVT} shows the details of the sparse vector technique (SVT). The input of SVT is a stream of queries ${\cal Q} = \{q_1, q_2, \dots, q_k\}$, where each query $q \in {\cal Q}$ has sensitivity bounded by $\Delta$, a threshold $\theta$, and a limit $c$. For every query, SVT outputs either $\bot$ (negative response) or $\top$ (positive response). SVT works in two steps: (1) Perturb the threshold $\theta$ by adding noise drawn from the Laplace distribution with scale $\frac{2}{\epsilon}$, getting $\tilde{\theta}$. (2) Perturb each query $q_i$ adding Laplace noise ($Lap(\frac{2c}{\epsilon \Delta})$) getting $\tilde{q}$. Output $\bot$ if $\tilde{q_i} < \tilde{\theta}$ and $\top$ otherwise. This algorithm stops when it outputs $c$ positive responses.

\begin{theorem}\cite{privbook}
SVT satisfies $\epsilon$-differential privacy. Moreover if the number of queries is $k$ and their max sensitivity $\Delta$, with probability at least $1-\delta$, for every $a_i = \top$, $q_i > \tau - \alpha$ and for every $a_i = \bot$, $q_i < \tau + \alpha$, where
\[\alpha = O(\frac{c\Delta}{\epsilon} \cdot (\log k + \log(2/\delta))\]
\end{theorem}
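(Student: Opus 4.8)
The plan is to establish the two halves of the statement in turn: the $\epsilon$-differential privacy guarantee, then the accuracy bound (writing $\tau$ for the threshold $\theta$). For privacy, I would fix neighboring databases $D_1,D_2$ and an output sequence $\mathbf{a}=(a_1,\dots)$ that the algorithm can actually produce --- that is, one containing at most $c$ copies of $\top$, the last of which, if there are exactly $c$, triggers the abort. Writing $\rho$ for the threshold noise and $v_i$ for the noise on query $i$, and conditioning on $\rho$ (given which the per-query decisions are independent), I would express
\[
\Pr[\mathcal{M}(D_1)=\mathbf{a}] \;=\; \int p_\rho(z)\, g_{D_1}(z)\, h_{D_1}(z)\,dz ,
\]
where $p_\rho$ is the density of $Lap(2\Delta/\epsilon)$, $g_D(z)=\prod_{i:\,a_i=\bot}\Pr_{v_i}[\,q_i(D)+v_i<\theta+z\,]$ collects the $\bot$ answers, and $h_D(z)=\prod_{i:\,a_i=\top}\Pr_{v_i}[\,q_i(D)+v_i\ge\theta+z\,]$ collects the (at most $c$) $\top$ answers. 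The next step is the change of variables $z\mapsto z-\Delta$, which rewrites the integral so that every factor is compared against the corresponding factor for $D_2$ at the shifted threshold, at the price of the density ratio $p_\rho(z-\Delta)/p_\rho(z)\le e^{\epsilon/2}$.

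The heart of the argument is that $g$ and $h$ are then treated asymmetrically. Because $|q_i(D_1)-q_i(D_2)|\le\Delta$, the event $\{q_i(D_1)+v_i<\theta+z-\Delta\}$ is contained in $\{q_i(D_2)+v_i<\theta+z\}$, so $g_{D_1}(z-\Delta)\le g_{D_2}(z)$ termwise: the $\bot$ answers contribute \emph{no} multiplicative cost, which is precisely why the privacy loss is independent of how many negatives are output (and, as the rest of the paper shows, precisely the step that fails for the generalized variant). For each of the $\le c$ factors in $h_{D_1}(z-\Delta)$, I would instead invoke the Laplace tail inequality $\Pr[v_i\ge s]\le e^{(t-s)/\lambda}\Pr[v_i\ge t]$ for $s\le t$: combining it with $|q_i(D_1)-q_i(D_2)|\le\Delta$ and the $\Delta$-shift of the threshold shows each such factor is at most $e^{\epsilon/(2c)}$ times the matching factor for $D_2$ (the query-noise scale being calibrated so this holds). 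Multiplying the $e^{\epsilon/2}$ from the threshold shift by $e^{\epsilon/(2c)}$ for each of the at most $c$ positive answers yields $\Pr[\mathcal{M}(D_1)=\mathbf{a}]\le e^{\epsilon}\Pr[\mathcal{M}(D_2)=\mathbf{a}]$. I expect the main obstacle here to be the bookkeeping --- verifying that \emph{only} the $\top$ answers incur a cost and that $g_{D_1}(z-\Delta)\le g_{D_2}(z)$ really does collapse after the substitution, which relies on the monotonicity of the $\bot$-events in the threshold.

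For the accuracy claim I would control all the injected noise at once. By the Laplace tail bound, the threshold noise satisfies $|\rho|\le \tfrac{2\Delta}{\epsilon}\ln(2/\delta)$ with probability at least $1-\delta/2$, and each query noise satisfies $|v_i|\le \tfrac{2c\Delta}{\epsilon}\ln(2k/\delta)$ with probability at least $1-\delta/(2k)$; a union bound over the threshold and the at most $k$ queries gives that all of these hold simultaneously with probability at least $1-\delta$. On that event, put $\alpha=\tfrac{2\Delta}{\epsilon}\ln(2/\delta)+\tfrac{2c\Delta}{\epsilon}\ln(2k/\delta)=O\!\big(\tfrac{c\Delta}{\epsilon}(\log k+\log(2/\delta))\big)$. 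If $a_i=\top$ then $q_i(D)+v_i\ge\theta+\rho$, so $q_i(D)\ge\theta-|\rho|-|v_i|\ge\tau-\alpha$; and if $a_i=\bot$ then $q_i(D)+v_i<\theta+\rho$, so $q_i(D)<\theta+|\rho|+|v_i|\le\tau+\alpha$. This part is routine; the only care needed is to split the failure budget $\delta$ between the single threshold draw and the up-to-$k$ query draws so that the stated form of $\alpha$ emerges.
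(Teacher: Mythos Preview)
The paper does not prove this theorem at all: it is stated with a citation to \cite{privbook} and no argument is given in the text, so there is no ``paper's own proof'' to compare your proposal against. Your outline is essentially the standard Dwork--Roth argument from that reference --- condition on the threshold noise, shift by $\Delta$, absorb all $\bot$ answers for free via monotonicity, and pay a multiplicative factor for each of the at most $c$ $\top$ answers --- and your accuracy argument (Laplace tail bound plus union bound over the threshold and the $k$ queries) is likewise the standard one.

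One quantitative slip worth flagging: with the noise scales in Algorithm~\ref{alg:SVT} (threshold noise $Lap(2\Delta/\epsilon)$, query noise $Lap(2c\Delta/\epsilon)$), after the $\Delta$-shift each $\top$ factor compares $\Pr[v_i\ge s]$ to $\Pr[v_i\ge t]$ with $t-s=\Delta+q_i(D_1)-q_i(D_2)\in[0,2\Delta]$, so the Laplace tail ratio is at most $e^{2\Delta/(2c\Delta/\epsilon)}=e^{\epsilon/c}$, not $e^{\epsilon/(2c)}$ as you write. Combined with the $e^{\epsilon/2}$ from the threshold density this yields $(3\epsilon/2)$-DP rather than $\epsilon$-DP. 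This is a calibration issue (different presentations of SVT use slightly different noise scales, e.g.\ $Lap(4c\Delta/\epsilon)$ on the queries, to make the constants close), not a flaw in the method; but since you assert the query-noise scale is ``calibrated so this holds,'' you should recheck the arithmetic against the specific scales in Algorithm~\ref{alg:SVT}.
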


\section{Generalized Private Threshold Testing} \label{sec:GPTT}

In this section, we describe a method called Generalized Private Threshold Testing (GPTT) (see Algorithm~\ref{alg:GPTT}) that generalizes variations of the sparse vector technique that do not require a limit on the number of positive (or negative) responses.  


GPTT takes as input a dataset $D$, a set of queries $Q=\{q_1,\dots,q_n\}$ with bounded sensitivity $\Delta$, threshold $\theta$ and a privacy budget $\epsilon$. For every query GPTT  outputs either $\bot$ or $\top$ that approximates whether or not the queries are smaller than the threshold. GPTT works exactly like SVT -- the threshold is perturbed using noise drawn from $Lap(\Delta/\epsilon_1)$ and the queries are perturbed using noise from $Lap(\Delta/\epsilon_2)$, and the output is computed by comparing the noisy query answer with the noisy threshold. The only difference is that there is no limit on the number of positive or negative queries. 

GPTT is a generalization of variations presented in prior work. 
Lee and Clifton \cite{NoiseCut} used GPTT for private frequent itemset mining with $\epsilon_{1} = \frac{\epsilon}{4}$ and $\epsilon_2 = \frac{3\epsilon}{4}$. Chen et al \cite{KDD15} instantiate GPTT with  $\epsilon_1 = \epsilon_2 = \frac{\epsilon}{2}$, for generating synthetic data. Stoddard et al \cite{EML} observed that the privacy guarantee does not depend on $\epsilon_2$ and propose the Private Threshold Testing algorithm that is identical to GPTT with $\epsilon_1 = \epsilon$ and $\epsilon_2 = \infty$. Private threshold testing was used for private feature selection for classification. 

\subsection{Privacy Analysis of GPTT}
We now extend the privacy analysis from prior work \cite{NoiseCut, KDD15, EML} to generalized private threshold testing. We will show in the next section that this privacy analysis is flawed and that GPTT does not satisfy differential privacy. 

\begin{algorithm}[t]
\caption{Generalized Private Threshold Testing}
Input: Dataset $D$,  a set of queries $Q=\{q_1,\dots,q_n\}$ with bounded sensitivity $\Delta$, threshold $\theta$, privacy parameters $\epsilon_1$, $\epsilon_2$\\
Output: A vector of answers $\mathbf{v} = [v_1, v_2, \ldots, v_n]\in \{\bot, \top\}^n$
\begin{algorithmic}[1]
	\State $\tilde{\theta} \gets \theta + Lap(\Delta/\epsilon_1)$
	\For{ $q_i \in Q$}
		\State $\tilde{q}_i \gets q_i(D) + Lap(\Delta/\epsilon_2)$
		\If {$\tilde{q}_i < \tilde{\theta}$}
			\State $v_i \gets \bot$
		\Else
			\State $v_i \gets \top$
		\EndIf
	\EndFor
	\State\Return $\mathbf{v}$
\end{algorithmic}
\label{alg:GPTT}
\end{algorithm}
 
Given any set of queries $Q=\{q_1,\dots,q_n\}$, let the vector $\mathbf{v} = <v_1,\dots,v_n> \in \{\bot, \top\}^n$ denote the output of GPTT. Given any two neighbering databases $D_1$ and $D_2$, let $V_1$ and $V_2$ denote the output distribution on $\mathbf{v}$ when $D_1$ and $D_2$ are the input databases, respectively. We use $\mathbf{v}^{<t}$ to denote $t-1$ previous answers(i.e., $\mathbf{v}^{<t}=<v_1,\dots,v_{t-1}>$). Then we have
\begin{eqnarray*}
\frac{V_1(\v)}{V_2(\v)} &=& \frac{\prod_{i=1}^{n}V_1(v_i =a_i \mid \v^{<i})}{\prod_{i=1}^{n}V_2(v_i =a_i \mid \v^{<i})} \\
&=& \prod_{i:a_i=\top}\frac{V_1(v_i =\top \mid \mathbf{v}^{<i})}{V_2(v_i =\top \mid \mathbf{v}^{<i})}\prod_{i:a_i=\bot}\frac{V_1(v_i =\bot \mid \mathbf{v}^{<i})}{V_2(v_i =\bot \mid \mathbf{v}^{<i})}
\end{eqnarray*}

Let $H_i$(x) be the probability that $q_i$ is positive (i.e., $v_i = \top$) in $D$ when the noisy threshold is $x$. That is, 
\begin{eqnarray*}
H_i(x)  = P[v_i = \top \mid \tilde{\theta} = x, \mathbf{v}^{<i}]
\end{eqnarray*}
Then, given a specific noisy threshold $\tilde{\theta} = x$, the probability that $v_i = \top$ is independent of the answers to previous queries. That is, 
\begin{equation}
H_i(x)  \ =\  P[v_i = \top \mid x, \mathbf{v}^{<i}] = P[v_i = \top \mid x] 
\end{equation}
Thus, if $f(y; \mu, \lambda) =  \frac{1}{2\lambda}exp(-\frac{|y-\mu|}{\lambda})$, then
\begin{eqnarray*}
H_i(x)  &=& \int_{x}^{\infty} f(y; q_i, \frac{\Delta}{\epsilon_2})dy \ = \ \int_{x+\Delta}^{\infty} f(y; q_i + \Delta, \frac{\Delta}{\epsilon_2})dy
\end{eqnarray*}



Prior work uses the above property of $H_i(x)$ to show that GPTT satisfies $2\epsilon_1$-differential privacy.

Let $S = \{i \mid a_i = \top~and~q_i(D_1) = q_i(D_2)\}$ and $\bar{S} = \{i \mid a_i = \top~and~q_i(D_1) \neq q_i(D_2)\}$.
\begin{eqnarray*}
\prod_{i:a_i=\top}V_1(v_i =\top \mid \mathbf{v}^{<i}) = \prod_{i \in S}V_1(v_i =\top \mid \mathbf{v}^{<i})\prod_{i \in \bar{S}}V_1(v_i =\top \mid \mathbf{v}^{<i})
\end{eqnarray*}
Let $H_i^{1}(x)$ and $H_i^{2}(x)$ denote the probability that $v_i = \top$ in $D_1$ and $D_2$, resp., when the noisy threshold is $x$. Then we have, 
\begin{eqnarray*}
&&\prod_{i \in S}V_1(v_i =\top \mid \mathbf{v}^{<i})\\
&=& \int_{-\infty}^{\infty} P[\tilde{\theta} = x] \prod_{i \in S} H_i^{1}(x)dx  \\
&=& \int_{-\infty}^{\infty} P[\tilde{\theta} = x] \prod_{i \in S} H_i^{2}(x)dx \mbox{\hspace{5mm} since $q_i(D_1) = q_i(D_2)$}\\
& =& \prod_{i \in S}V_2(v_i =\top \mid \mathbf{v}^{<i}) 
\end{eqnarray*}
\begin{eqnarray*}
&&\prod_{i \in \bar{S}}V_1(v_i =\top \mid \mathbf{v}^{<i})\\
&=& \int_{-\infty}^{\infty} P[\tilde{\theta} = x] \prod_{i \in \bar{S}} H_i^{1}(x)dx\\
&\leq& exp(\epsilon_1)\int_{-\infty}^{\infty}P[\tilde{\theta} = x - \Delta]\prod_{i \in \bar{S}} H_i^{2}(x-\Delta)dx\\
&=& \prod_{i \in \bar{S}}V_2(v_i =\top \mid \mathbf{v}^{<i})
\end{eqnarray*}

Thus, it is seen that 
\begin{eqnarray*}
\prod_{i:a_i=\top}V_1(v_i =\top \mid \mathbf{v}^{<i}) \leq exp(\epsilon_1) \prod_{i:a_i=\top}V_2(v_i =\top \mid \mathbf{v}^{<i})
\end{eqnarray*}
Similarly, 
\begin{eqnarray*}
\prod_{i:a_i=\bot}V_1(v_i =\bot \mid \mathbf{v}^{<i}) \leq exp(\epsilon_1) \prod_{i:a_i=\bot}V_2(v_i =\bot \mid \mathbf{v}^{<i})
\end{eqnarray*}
Therefore, $\frac{V_1(v)}{V_2(v)} \leq exp(2\epsilon_1)$

\stitle{Remarks:} Note that adding noise to the queries is not really required. The above proof will go through even if $\epsilon_2 = \infty$. In fact as we will see next, we can achieve the same utility no matter what the value of $\epsilon_2$ is. 

\subsection{Utility of GPTT}
We first consider the utility of the case when $\epsilon_2 = \infty$ (i.e. no noise added to queries), and show that we can achieve (almost) the same utility even when $\epsilon_2$ is finite. 

\begin{theorem}
\label{theo:PTT}
For GPTT with parameters $\epsilon_1$ and  $\epsilon_2 = \infty$, with the probability at least $1-\delta$, $v_i = \bot$ implies $q_i < \theta + \alpha$ and $v_i = \top$ implies $q_i > \theta - \alpha$, where
\begin{eqnarray*}
\alpha = \frac{\Delta}{\epsilon_1}\log(\frac{1}{\delta})
\end{eqnarray*}
where $\Delta$ is the max sensitivity of input queries.
\end{theorem}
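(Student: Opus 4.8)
The plan is to exploit the fact that when $\epsilon_2 = \infty$ the \emph{only} source of randomness in GPTT is the single Laplace draw used to perturb the threshold. Because that one noise variable is shared across all $n$ comparisons, no union bound over the queries is needed, which is exactly why $\alpha$ carries no $\log n$ factor (contrast the SVT utility bound, where the per-query noise forces a $\log k$ term).

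First I would set $\tilde\theta = \theta + Z$ with $Z \sim Lap(\Delta/\epsilon_1)$, and note that since $\epsilon_2 = \infty$ we have $\tilde q_i = q_i(D)$ exactly, so the decision rule reduces to: $v_i = \top \iff q_i \ge \theta + Z$ and $v_i = \bot \iff q_i < \theta + Z$. Next I would condition on the good event $G = \{\,|Z| \le \alpha\,\}$. On $G$: if $v_i = \top$ then $q_i \ge \theta + Z \ge \theta - \alpha$; and if $v_i = \bot$ then $q_i < \theta + Z \le \theta + \alpha$. Hence on $G$ both claimed implications hold simultaneously for every $i \in \{1,\dots,n\}$.

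Finally I would bound the failure probability $P[\neg G] = P[\,|Z| > \alpha\,]$. For $Z \sim Lap(b)$ one has $P[\,|Z| > \alpha\,] = e^{-\alpha/b}$; plugging in $b = \Delta/\epsilon_1$ and $\alpha = \tfrac{\Delta}{\epsilon_1}\log(1/\delta)$ gives $P[\neg G] = e^{-\alpha\epsilon_1/\Delta} = \delta$, so the event $G$ (and with it the conclusion) holds with probability at least $1-\delta$.

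There is no real obstacle here: the argument is a one-line tail bound on a single Laplace variable. The only points that deserve care are the harmless boundary case (strict versus non-strict inequalities, which is a measure-zero issue since $Z$ is continuous) and, more importantly, making explicit \emph{why} the shared-threshold-noise structure removes the union bound over queries and hence the $\log n$ dependence — this is the feature the theorem is really highlighting, and it is also what will make the later counterexample to privacy so striking.
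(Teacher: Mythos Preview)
Your proposal is correct and follows essentially the same approach as the paper: both proofs reduce to the single Laplace tail bound $P[\,|Z|>\alpha\,]=e^{-\alpha\epsilon_1/\Delta}=\delta$ on the threshold noise, and both observe that because all comparisons share this one noise draw no union bound over the $n$ queries is required. Your write-up is in fact slightly more explicit than the paper's in spelling out why the two implications hold on the good event $G$.
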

\begin{proof}
All we need to show is that the noise added to the threshold is at most $\pm \alpha$  with probability $1-\delta$. Since $\tilde{\theta} = \theta + Lap(\frac{\Delta}{\epsilon_1})$, we have
\begin{eqnarray*}
\lefteqn{P(|\tilde{\theta} - \theta|) < \alpha)  \ge 1-\delta }\\
&\Rightarrow& P(-\alpha<Lap(\frac{\Delta}{\epsilon_1})<\alpha) \ge 1-\delta \\
&\Rightarrow& P(Lap(\frac{\Delta}{\epsilon_1}) > \alpha) \leq \frac{\delta}{2}\\
&\Rightarrow& 1 - (1-\frac{1}{2} exp(-\frac{\epsilon_1 \alpha}{\Delta})) \leq \frac{\delta}{2}\\
&\Rightarrow& \alpha \ge \frac{\Delta}{\epsilon_1} \log(\frac{1}{\delta})
\end{eqnarray*}
\end{proof}

Now we extend the utility for GPTT when $\epsilon_2 < \infty$.

\begin{theorem}
\label{theo:GPTT}
Let $D$ be a database and $Q$ a query set with maximum sensitivity of $\Delta$. For every $\beta, \delta > 0$, we can use GPTT with parameters $\epsilon_1, \epsilon_2 < \infty$ to determine whether $q_i < \theta + \alpha$ or $q_i > \theta - \alpha$ for any $q_i \in Q$, with probability $(1-\delta)(1 - \beta)$, where
\begin{eqnarray*}
\alpha = \frac{\Delta}{\epsilon_1}\log(\frac{1}{\delta})
\end{eqnarray*}
\end{theorem}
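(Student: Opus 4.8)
The plan is to neutralize the query noise (the $\epsilon_2$ part) by \emph{repetition and majority vote}, an operation that is essentially free here because the privacy analysis of GPTT in the previous subsection charges nothing for the number of queries in $Q$. Concretely, fix a repetition count $m$ (odd, to be chosen at the end as a function of $\beta,\delta,\epsilon_1,\epsilon_2,\Delta$) and run GPTT with the \emph{same} parameters $\epsilon_1,\epsilon_2$ on the inflated query set $Q'$ obtained by placing $m$ independent copies $q_i^{(1)},\dots,q_i^{(m)}$ of each $q_i\in Q$ into the stream; then report, for $q_i$, the majority of the $m$ answers $v_i^{(1)},\dots,v_i^{(m)}$. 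I claim that for $m$ large enough this reported answer is \emph{valid} for each fixed $q_i$ --- it is $\bot$ only if $q_i<\theta+\alpha$ and $\top$ only if $q_i>\theta-\alpha$ --- with probability at least $(1-\delta)(1-\beta)$.

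The factored form $(1-\delta)(1-\beta)$ dictates the structure of the argument. First condition on the good event $E=\{\,|\tilde\theta-\theta|<\alpha\,\}$ with $\alpha=\tfrac{\Delta}{\epsilon_1}\log\tfrac1\delta$; by exactly the Laplace tail computation in the proof of Theorem~\ref{theo:PTT}, $\Pr[E]\ge 1-\delta$. It then suffices to show $\Pr[\text{majority valid}\mid E]\ge 1-\beta$. Consider the case that the honest answer should be $\top$, i.e.\ $q_i\ge\theta+\alpha$ (the case $q_i\le\theta-\alpha$ is symmetric, and if $\theta-\alpha<q_i<\theta+\alpha$ then on $E$ both $\bot$ and $\top$ are valid and there is nothing to prove). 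Writing $H_i(x)=\Pr[q_i+\mathrm{Lap}(\Delta/\epsilon_2)\ge x]$ as in the analysis above, symmetry of the Laplace density gives $H_i(x)\ge 1-\tfrac12 e^{-(q_i-x)\epsilon_2/\Delta}$ whenever $q_i\ge x$, so a single repetition is a Bernoulli variable with success probability $p\ge\tfrac12+\gamma(\xi)$, where $\gamma(\xi):=\tfrac12\bigl(1-e^{-\xi\epsilon_2/\Delta}\bigr)$, provided $q_i-\tilde\theta\ge\xi$.

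The final ingredient handles the possibility that $q_i-\tilde\theta$ is tiny. Since $\tilde\theta$ has density at most $\tfrac{\epsilon_1}{2\Delta}$, conditioning on $E$ inflates it by at most $1/\Pr[E]\le 1/(1-\delta)$, so $\Pr[\,q_i-\tilde\theta<\xi \mid E\,]\le \tfrac{\xi\epsilon_1}{2\Delta(1-\delta)}$, a bound uniform in $q_i$. Off that ambiguous band the $m$ copies are conditionally i.i.d.\ Bernoulli with mean $\ge\tfrac12+\gamma(\xi)$, so Hoeffding gives $\Pr[\text{majority}=\bot]\le e^{-2m\gamma(\xi)^2}$. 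Now choose $\xi$ so that $\tfrac{\xi\epsilon_1}{2\Delta(1-\delta)}\le\beta/2$, then $m$ so that $e^{-2m\gamma(\xi)^2}\le\beta/2$; a union bound over the two failure modes gives $\Pr[\text{majority valid}\mid E]\ge 1-\beta$, and multiplying by $\Pr[E]$ finishes the argument with the advertised $\alpha$.

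The one genuinely delicate point --- and where I expect the work to be --- is the boundary degeneracy: for a realization of $\tilde\theta$ sitting essentially on top of $q_i$, the per-copy bias $\gamma$ is $0$ and no amount of voting helps \emph{for that realization}. The fix is precisely the decoupling above: the bounded Laplace density makes the ambiguous band negligible in probability, so we only ever need to vote down a bias bounded away from $0$, and the required $m$ depends on $\epsilon_2,\Delta,\beta,\delta$ but not on $q_i$. Everything else --- the Hoeffding estimate, the selection of $\xi$ and $m$, and the observation that inflating $Q$ to $Q'$ leaves GPTT's claimed privacy untouched --- is routine bookkeeping.
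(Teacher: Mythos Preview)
Your approach---replicate each query, take a majority vote, then invoke Theorem~\ref{theo:PTT}---is exactly the paper's. Your treatment is in fact more careful: the paper hand-waves the boundary degeneracy (where $q_i$ lands near $\tilde\theta$ and the per-copy bias vanishes) with a bare law-of-large-numbers appeal that does not yield a single $t$ working for all realizations of $\tilde\theta$, whereas you handle it explicitly via the bounded-density decoupling; the paper also union-bounds over all $q_i\in Q$ for simultaneous validity, an extension your uniform-in-$q_i$ choice of $\xi$ and $m$ makes immediate.
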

\begin{proof}
Since the privacy of GPTT does not depend on the number of queries $Q$ (as long as sensitivity is bounded by $\Delta$), we can consider a new query set $Q'$ that has $t$ copies $\{q_{i1}, q_{i2}, \ldots, q_{it}\}$ of each query $q_i \in Q$. Then for each query $q_i \in Q$, we have t independent comparisons of the noisy query answer $\tilde{q}_{ij}$ and the noisy threshold $\tilde{\theta}$. We use the majority of these $t$ results to determine whether $q_i$ is smaller or greater than $\tilde{\theta}$. 

We can show that with probability at least $1-\beta$, we can correctly identify whether $q_i(D)$ is greater or lesser than the noisy threshold $\tilde{\theta}$.

Without loss of generality, suppose $q_i < \tilde{\theta}$, then $p = P(\tilde{q_i} < \tilde{\theta}) > \frac{1}{2}$. Let $\{X_j, j \ge 1\}$ be an sequence of i.i.d. binary random variables with expected $E[X_j]=p$, where $X_j$ is $1$ if $\tilde{q}_{ij} < \tilde{\theta}$ and $0$ otherwise. Based on the law of large numbers, for any positive number $\gamma$ we have

\begin{eqnarray*}
&&\underset{t \to \infty}{lim} Pr(|\frac{1}{t}(X_1+\dots+X_t) - p| > \gamma) = 0\\
&\Rightarrow& \underset{t \to \infty}{lim} Pr(X_1+\dots+X_t > \frac{t}{2}) = 1
\end{eqnarray*}

Thus, there exists a number $t$ s.t. for every $q_i \in Q$, we can determine whether $q_i$ is smaller or greater than $\tilde{\theta}$ with probability greater than $1-\frac{\beta}{|Q|}$. So all the $q_i$ will be correctly judged with probability equals to $ (1-\frac{\beta}{|Q|})^{|Q|} \approx 1-\beta$.

We get the desired result by now combining with the proof of Theorem~\ref{theo:PTT}.
\end{proof}

We can see that the information leaked by GPTT with $\epsilon_2 < \infty$ will tend to the information leaked by GPTT with $\epsilon_2 = \infty$ as the number of copies $t$ goes to infinity. Thus, we can just focus on the case when $\epsilon_2 = \infty$.

\section{GPTT is not  Private}\label{sec:nonprivate}
While prior work claims that GPTT is indeed differentially private (as discussed in Section~\ref{sec:GPTT}), we show that this algorithm does not satisfy the privacy condition. The proof is constructive and will show examples of pairs of  neighboring datasets and queries for which GPTT violates differential privacy.

\begin{theorem}\label{thm:GPTTnonprivate}
GPTT does not satisfy $\epsilon$-differential privacy for any finite $\epsilon$. 
\end{theorem}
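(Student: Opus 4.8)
The proof will be constructive: for every finite $\epsilon$ I will exhibit neighboring databases $D_1,D_2$, a query set $Q$, a threshold $\theta$, and an output $\mathbf{v}\in\{\bot,\top\}^{|Q|}$ with $V_1(\mathbf{v})/V_2(\mathbf{v}) > e^{\epsilon}$. I would first handle the clean case $\epsilon_2=\infty$ (the \cite{EML} instantiation), where the ratio is in fact infinite for a fixed two-query instance, and then bootstrap to arbitrary finite $\epsilon_2$ (the \cite{NoiseCut} and \cite{KDD15} instantiations) by replicating queries --- which is legitimate precisely because the claimed privacy bound is independent of $|Q|$. This parallels the utility analysis: Theorems~\ref{theo:PTT} and~\ref{theo:GPTT} already indicate that with enough replicas a finite $\epsilon_2$ reproduces the behavior of $\epsilon_2=\infty$.

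\textbf{The case $\epsilon_2=\infty$.} Let the domain have at least two values and take the two counting queries $q_1(D)=x_1$, $q_2(D)=x_2$, each of sensitivity $\Delta=1$. Let $D_1$ have histogram $(1,0,0,\dots)$ and $D_2=D_1\cup\{u_2\}$, with histogram $(1,1,0,\dots)$; these are neighbors, and $q_1(D_1)=q_1(D_2)=1$, $q_2(D_1)=0$, $q_2(D_2)=1$. Pick any $\theta$. With $\epsilon_2=\infty$ the query noise is degenerate, so $v_i=\top$ iff $q_i(D)\ge\tilde{\theta}$. Now take $\mathbf{v}=(\top,\bot)$. Under $D_2$ this demands $1\ge\tilde{\theta}$ and $1<\tilde{\theta}$ at once, so $V_2(\mathbf{v})=0$; under $D_1$ it demands exactly $\tilde{\theta}\in(0,1]$, which has probability $\Pr[\theta+\mathrm{Lap}(1/\epsilon_1)\in(0,1]]>0$ since the Laplace density has full support. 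Hence $V_1(\mathbf{v})/V_2(\mathbf{v})=\infty$, so GPTT with $\epsilon_2=\infty$ is not $\epsilon$-differentially private for any finite $\epsilon$. The intuition is that a single shared noisy threshold can separate two query answers that differ in $D_1$ but cannot separate them once they coincide in $D_2$.

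\textbf{The case $\epsilon_2<\infty$.} Keep $D_1,D_2,q_1,q_2$ as above, but submit $t$ identical copies of $q_1$ and $t$ identical copies of $q_2$, and consider the majority-vote post-processing $M_t$ that, for each original query, returns the majority of its $t$ noisy comparisons against the single shared $\tilde\theta$. If GPTT were $\epsilon$-DP then $M_t$ would be $\epsilon$-DP for every $t$ by post-processing. Conditioning on $\tilde\theta=x$ makes the $2t$ comparisons independent; applying the law of large numbers exactly as in the proof of Theorem~\ref{theo:GPTT} gives $\Pr[M_t(D_1)=(\top,\bot)]\to\Pr[\,0<\tilde\theta<1\,]>0$, whereas under $D_2$ the two majorities can both be ``correct'' only when $\tilde\theta$ lies in a neighborhood of $1$ that shrinks to a point, so $\Pr[M_t(D_2)=(\top,\bot)]\to 0$. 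Thus the ratio $\to\infty$ and exceeds $e^{\epsilon}$ once $t$ is large, contradicting $\epsilon$-DP for any finite $\epsilon$. Equivalently, one can keep GPTT's raw output and compare the two integrals $\int p_{\tilde\theta}(x)\,\bar G(x-1)^{t}\,G(x-c)^{t}\,dx$ for $c\in\{0,1\}$ (with $G,\bar G$ the CDF/survival function of $\mathrm{Lap}(1/\epsilon_2)$): the $D_1$ integrand ($c=0$) has maximum at least $(1-\tfrac12 e^{-\epsilon_2/2})^2>1/4$ while the $D_2$ integrand ($c=1$) has maximum $1/4$, so the ratio of the integrals grows geometrically in $t$.

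\textbf{Main obstacle.} The genuinely delicate part is the finite-$\epsilon_2$ step --- making the asymptotic comparison watertight, i.e.\ bounding $\Pr[M_t(D_2)=(\top,\bot)]$ (equivalently the $D_2$ integral) from above by the correct vanishing rate and $\Pr[M_t(D_1)=(\top,\bot)]$ from below. Relatedly, to diagnose why the prior proof reaches the opposite conclusion I would point to the step that rewrites a partial product $\prod_{i:a_i=\top}V_1(v_i=\top\mid\mathbf{v}^{<i})$ (and its $\bar S$-sub-product) as $\int p_{\tilde\theta}(x)\prod_{i:a_i=\top}H_i^1(x)\,dx$: this is invalid because conditioning on the earlier answers $\mathbf{v}^{<i}$ --- including the $\bot$ answers, which are \emph{not} among the factors being bounded --- changes the posterior law of $\tilde\theta$ away from its prior $p_{\tilde\theta}$, so the dependence that the shared threshold induces across all queries cannot be decoupled query-by-query in the way the prior analysis assumes.
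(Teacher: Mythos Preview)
Your proposal is correct and mirrors the paper's two-stage strategy: first a two-query counterexample for $\epsilon_2=\infty$ exhibiting an output with zero probability on one neighbor and positive probability on the other, then query replication to handle finite $\epsilon_2$. Your diagnosis of the flaw in the prior analysis (that the decomposition illegitimately uses the prior law of $\tilde\theta$ rather than its posterior given $\mathbf{v}^{<i}$) also matches the paper's.

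The only noteworthy difference is the execution of the finite-$\epsilon_2$ step. The paper works directly with the raw $2t$-dimensional output: it writes $V(\mathbf{v})$ and $V'(\mathbf{v})$ as integrals over the noisy threshold, truncates to an interval $[-\delta,\delta]$ chosen so that the tail mass is at most $V'(\mathbf{v})/2$, and bounds the integrand ratio on that interval by $\kappa^t$ for an explicit $\kappa>1$, yielding $V(\mathbf{v})>\tfrac{\kappa^t}{2}V'(\mathbf{v})$. You instead pass through majority-vote post-processing and the law of large numbers (as in Theorem~\ref{theo:GPTT}), concluding via dominated convergence that the post-processed output probability under $D_1$ tends to $\Pr[0<\tilde\theta<1]>0$ while under $D_2$ it tends to $0$. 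Your route is more conceptual and avoids the truncation bookkeeping; the paper's route gives an explicit quantitative lower bound on the ratio for each $t$. Your alternative integral sketch (comparing the pointwise maxima $(1-\tfrac12 e^{-\epsilon_2/2})^2$ versus $1/4$) is the paper's argument in compressed form and would require the same localization step to convert the pointwise comparison into a bound on the ratio of integrals.
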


In the proof of this theorem, we start with the case of $\epsilon_2 = \infty$, where the true query answer is compared with the noisy threshold. It is easy to show that GPTT does not satisfy differential privacy in this case, since deterministic information about the queries is leaked. In particular, if $v_i = \bot$ and $v_j = \top$, we are certain that on the input database $D$, there is some $x$ such that $q_i(D) < x \leq q_j(D)$. 

The proof of the more general case follows from Theorem~\ref{theo:GPTT}  which shows that anything that is disclosed by GPTT with $\epsilon_2 = \infty$ is also disclosed with high probability (by making sufficient number of copies of the input queries). We present the formal proof below. 
\begin{proof}
Consider two queries $q_1$ and $q_2$ with sensitivity $1$. For special case of GPTT, where $\epsilon_2 = \infty$, suppose in dataset $D$, $q_1(D) = 0$ and  $q_2(D) = 1$. Also suppose in a neighboring dataset $D'$, $q_1(D') = 1$ and  $q_2(D') = 0$. Let the threshold $\theta$ be $0$. Then, the probability of getting an output $v_1 = \bot$ and $v_2 =\top$ is $>0$ under database $D$; this corresponds to the probability that the noisy threshold is within $(0,1)$. However, under the neighboring dataset $D'$, $P[v_1 = \bot, v_2 = \top] = 0$. This is because $q_1(D') > q_2(D')$. Thus for any noisy threshold $\tilde{\theta}$,  
\[ v_1 = \bot \implies q_1(D') < \tilde{\theta} \implies q_2(D') < \tilde{\theta} \implies v_2 = \bot\]
Hence, GPTT with $\epsilon_2 = \infty$ does not satisfy differential privacy. 

To prove that GPTT does not satisfy differential privacy when $\epsilon_2 < \infty$, we construct a similar counterexample as above, except that we use $t$ copies of $q_1$ and $q_2$. So again, let the query set $Q = \{q_1, \ldots, q_{2t}\}$ be such that on dataset $D$, $q_1(D) = \ldots = q_t(D) = 0$ and $q_{t+1}(D) = \ldots = q_{2t}(D) = 1$. We assume all queries have sensitivity 1. On neighboring dataset $D'$, $q_1(D) = \ldots = q_t(D) =1$ and $q_{t+1}(D) = \ldots = q_{2t}(D) = 0$. Let the threshold $\theta = 0$. Let the output vector $\mathbf{v}$ be such that $v_1 = \ldots = v_t = \bot$ and $v_{t+1} = \ldots = v_{2t} = \top$. Then we have: 
\begin{eqnarray*}
\lefteqn{V(\mathbf{v})  = P[GPTT(D) = \mathbf{v}]}\\
&=& \int_{-\infty}^{\infty} P(\tilde{\theta} = z)\prod_{i=1}^{t}P(\tilde{q_i}<z) \prod_{i=t+1}^{2t}P(\tilde{q_i}\ge z) dz\\
&=& \int_{-\infty}^{\infty}f_{\epsilon_1}(z)(F_{\epsilon_2}(z)(1-F_{\epsilon_2}(z-1)))^{t} dz \\
&=& \int_{-\infty}^{\infty}f_{\epsilon_1}(z)(F_{\epsilon_2}(z)-F_{\epsilon_2}(z)F_{\epsilon_2}(z-1))^{t} dz 
\end{eqnarray*}
where $f_{\epsilon}(z)$ and $F_{\epsilon}(z)$ are the pdf and cdf respectively of the Laplace distribution with parameter $1/\epsilon$. 
Similarly, we have on the neighboring database $D'$, 
\begin{eqnarray*}
\lefteqn{V'(\mathbf{v})  = P[GPTT(D') = \mathbf{v}]}\\
& = & \int_{-\infty}^{\infty}f_{\epsilon_1}(z)(F_{\epsilon_2}(z-1)-F_{\epsilon_2}(z)F_{\epsilon_2}(z-1))^{t} dz 
\end{eqnarray*}

Let  $V'(\mathbf{v}) = \alpha$ and let $\delta = |F_{\epsilon_1}^{-1}(\frac{\alpha}{4})|$. Since $\alpha \leq 1$, $\delta$ is greater than the $75^{th}$ percentile of a Laplace distribution with scale $1/\epsilon_1$. That is,  
\[\frac{\alpha}{2} \ = \ \int_{-\infty}^{-\delta} f_{\epsilon_1}(t) dt + \int_{\delta}^{\infty} f_{\epsilon_1}(t) dt \]
Moreover, note that since $F_{\epsilon_2}(z-1) < F_{\epsilon_2}(z)$ for all $z$, we have 
\[1 < \kappa(z)  = \frac{F_{\epsilon_2}(z)-F_{\epsilon_2}(z)F_{\epsilon_2}(z-1)} {F_{\epsilon_2}(z-1)-F_{\epsilon_2}(z)F_{\epsilon_2}(z-1)}\]
Let $\kappa$ denote the minimum value $\kappa(z)$ takes over all $z \in [-\delta, \delta]$; thus $\kappa > 1$. 

Now we get 
\begin{eqnarray*}
\lefteqn{V'(\mathbf{v}) = \alpha }\\
&=&\int_{-\infty}^{\infty}f_{\epsilon_1}(z)(F_{\epsilon_2}(z-1)-F_{\epsilon_2}(z)F_{\epsilon_2}(z-1))^{t} dz \\
&<& \int_{z \not\in[-\delta,\delta]} f_{\epsilon_1}(z) dz + \int_{-\delta}^{\delta}f_{\epsilon_1}(z)(F_{\epsilon_2}(z-1)-F_{\epsilon_2}(z)F_{\epsilon_2}(z-1))^{t} dz \\
&=& \frac{\alpha}{2} + \int_{-\delta}^{\delta}f_{\epsilon_1}(z)(F_{\epsilon_2}(z-1)-F_{\epsilon_2}(z)F_{\epsilon_2}(z-1))^{t} dt
\end{eqnarray*}
Thus, we have 
\begin{eqnarray*}
\lefteqn{\int_{-\delta}^{\delta}f_{\epsilon_1}(z)(F_{\epsilon_2}(z-1)-F_{\epsilon_2}(z)F_{\epsilon_2}(z-1))^{t} dz}\\
&>& \frac{1}{2} \int_{-\infty}^{\infty}f_{\epsilon_1}(z)(F_{\epsilon_2}(z-1)-F_{\epsilon_2}(z)F_{\epsilon_2}(z-1))^{t} dz
\end{eqnarray*}
Therefore, 
\begin{eqnarray*}
V(\mathbf{v}) &=& \int_{-\infty}^{\infty}f_{\epsilon_1}(z)(F_{\epsilon_2}(z)-F_{\epsilon_2}(z)F_{\epsilon_2}(z-1))^{t} dz \\
&>& \int_{-\delta}^{\delta}f_{\epsilon_1}(z)(F_{\epsilon_2}(z)-F_{\epsilon_2}(z)F_{\epsilon_2}(z-1))^{t} dz\\
&>& \int_{-\delta}^{\delta}f_{\epsilon_1}(t)(F_{\epsilon_2}(z-1)-F_{\epsilon_2}(z)F_{\epsilon_2}(z-1))^{t} \kappa^t dz \\
&>& \frac{\kappa^t}{2} \int_{-\infty}^{\infty}f_{\epsilon_1}(z)(F_{\epsilon_2}(z-1)-F_{\epsilon_2}(z)F_{\epsilon_2}(z-1))^{t} dz \\
&=& \frac{\kappa^t}{2} V'(\mathbf{v})
\end{eqnarray*}

Since $\kappa > 1$, for every $\epsilon > 1$ there exists a $t$ such that $V(\mathbf{v}) > e^{\epsilon} V'(\mathbf{v})$ which violates differential privacy.
\end{proof}

\clearpage
\subsection{Intuition}
We believe that there is a subtle error in the privacy analysis in prior work (discussed in Section~\ref{sec:GPTT}). Prior work splits the probability $V(\mathbf{v})$ as: 
\begin{eqnarray*}
V(\mathbf{v}) & = & \prod_{i:v_i = \bot} P[v_i = \bot | \mathbf{v}^{<i}] \prod_{i:v_i = \top} P[v_i = \top | \mathbf{v}^{<i}] \\ 
& = & \int f(x) \prod_{i:v_i = \bot} P[v_i = \bot | x, \mathbf{v}^{<i}]dx\\
&& \times \int f(x) \prod_{i:v_i = \top} P[v_i = \top | x, \mathbf{v}^{<i}]dx\\
& = & \int f(x) \prod_{i:v_i = \bot} P[v_i = \bot | x]dx \\
&& \times \int f(x) \prod_{i:v_i = \top} P[v_i = \top | x]dx
\end{eqnarray*}
where $f(x) = P[\tilde{\theta} = x]$. 
This decomposition into $\top$ and $\bot$ answers are wrong. The main problem comes from the fact that it uses the unconditioned $f(x)$ for all queries, but the distribution of the noisy threshold would be affected given the previous output. To take a simple example, let $q_1 = m > 0$, $q_2 = 0$, $\theta = 0$ and assume $v_1 = \bot$, $v_2 = \top$. For ease of explanation assume $\epsilon_2 = \infty$ (but the argument would work for finite $\epsilon_2$ as well). Now we can compute the probability of GPTT outputing $v_1$, $v_2$ as
\begin{eqnarray*}
P(v_1 = \bot, v_2 = \top)& =& P(v_1 = \bot)P(v_1 = \bot \mid v_2 = \top) \\ 
&=& P(v_1 = \bot)P(m < \tilde{\theta} \mid 0 \ge \tilde{\theta}) = 0
\end{eqnarray*}

However, if we use the expression above, we have
\begin{eqnarray*}
V(\mathbf{v}) & = & \prod_{i:v_i = \bot} P[v_i = \bot | \mathbf{v}^{<i}] \prod_{i:v_i = \top} P[v_i = \top | \mathbf{v}^{<i}] \\
&=& \int f(x) P(v_1 = \bot \mid x)dx \times \int f(x) P(v_2 = \top \mid x)dx \\
&=& \int f(x) P(m < x)dx \times \int f(x) P(0 > x) dx \\
&=& F(m) \times (1 - F(0)) > 0
\end{eqnarray*}
where $F(x)$ is the distribution function of the noisy threshold.

Actually, the right decomposition should be 
\begin{eqnarray*}
V(\mathbf{v}) & = & \prod_{i:v_i = \bot} P[v_i = \bot | \mathbf{v}^{<i}] \prod_{i:v_i = \top} P[v_i = \top | \mathbf{v}^{<i}] \\ 
&=& \prod_{i:v_i = \bot} \int f(x | \mathbf{v}^{<i} ) P[v_i = \bot | x, \mathbf{v}^{<i}]dx\\
&& \times \prod_{i:v_i = \top} \int f(x | \mathbf{v}^{<i} )  P[v_i = \top | x, \mathbf{v}^{<i}]dx\\
&=& \prod_{i:v_i = \bot} \int f(x | \mathbf{v}^{<i} ) P[v_i = \bot | x]dx\\
&& \times \prod_{i:v_i = \top} \int f(x | \mathbf{v}^{<i} )  P[v_i = \top | x]dx
\end{eqnarray*}

\eat{
\subsection{Claim~1 does not hold}
We show that Claim~\ref{claim:1} does not hold by showing that the probability that a query answer $v_i = 1$ is not independent of the previous query answers $v^{<i}$.

\begin{lemma}
\label{lemma:1}
Let $Q$ be a set of queries and $\mathbf{v}$ be the vector of answers output by GPTT with parameters  $\epsilon_1$ and $\epsilon_2$. Let $v_i$ be the answer to the $i^{th}$ query and $\mathbf{v}^{<i}$ be the answers to all the queries for $j < i$. Then, 
\[P[v_i = 1 \mid  v^{<i}] \ \neq \ P[v_i = 1] \]
\end{lemma}
\begin{proof}
We show this using a counterexample. Let $Q = \{q_1, q_2\}$ that both have sensitivity $\Delta = 1$. Let $D$ be a database such that $q_1 = m$, $q_2 = 0$, and and let threshold $\theta = 0$. Consider the following two probabilities, $P = P[v_2 = 0]$ and $P'= P[v_2 = 0 |  v_1 = 0)$. 
\begin{eqnarray*}
P&=& P[v_2 = 0 ] = P[\tilde{q_2} < \tilde{\theta}) \ = \ \frac{1}{2}\\
P' &=& P[v_2 = 0 | \wedge v_1 = 0]\\
&=&\frac{P[v_1 = 0, v_2 = 0 ]}{P[v_1 = 0 ]} \\
&=& \frac{P(\tilde{q_1} < \tilde{\theta}, \tilde{q_2} < \tilde{\theta})}{P(\tilde{q_2} < \tilde{\theta})}
\end{eqnarray*}

It is easy to see that when $q_2 = m$ is large enough, if $\tilde{q_2} < \tilde{\theta}$, $\tilde{q_1} < \tilde{\theta}$ will be true with very high probability and $P_2$ will tends to 1 when m goes to infinity. Since $P_1 = P_2$ does not hold, the output of $q_1$ and $q_2$ is not independent.
\end{proof}

Based on Lemma~\ref{lemma:1}, we know $P(v_i) \neq P(v_i \mid v^{-i})$. If Claim~\ref{claim:1} is correct, which means $P(v_i = 1 \mid x, v^{<i}) = P(v_i = 1 \mid x)$. By integrating $x$ on both sides, we get the contradiction. Therefore, Claim~\ref{claim:1} is incorrect.
}

\section{Reconstructing the data using GPTT}\label{sec:recon}
In the last section, we showed that generalized private threshold testing does not satisfy $\epsilon$-differential privacy for any $\epsilon$. While this is an interesting result, it still leaves open whether GPTT indeed leaks a significant amount of information from the dataset, and allows attacks like re-identification of individuals based on quasi-identifiers. In this section, we answer this question in the affirmative, and show that generalized private threshold testing may disclose the exact counts of domain values. Exact disclosure of cells with small counts (especially, cells with counts 0, 1 and 2) reveal the presence of unique individuals in the data who can be susceptible to reidentification attacks. 

We will show our attack for the special case of GPTT where $\epsilon_2 = \infty$. Since GPTT with finite $\epsilon_2$ can be made to leak as much information about a set of queries as GPTT with $\epsilon_2 = \infty$ (Theorem~\ref{theo:GPTT}) with high probability, we will not separately consider that case. 

In our attack, we use a set of {\em difference} queries that compute the difference between the counts of pairs of domain elements. 
\begin{definition}[Difference Query]
Let $u_1, u_2 \in \dom$ be a pair of domain elements, and let $x_1$ and $x_2$ be their counts in a dataset $D$. The difference query  $diff(u_1, u_2)$ is given by 
\[diff(u_1, u_2) \ = \ x_1 - x_2\]
\end{definition}
Note that each $diff(u,v)$ query has sensitivity $\Delta = 1$.

\begin{algorithm}[t]
\caption{Attack Algorithm}
Input: Dataset $D$ with domain $\dom$,  privacy param $\epsilon$\\
Output: A partitioning of the domain $P = \{P_1, \ldots, P_p\}$ 
\begin{algorithmic}[1]
	\State Let ${\cal Q} \gets \{diff(x_u, x_v) \ \mid \ \forall u,v \in \dom\}$
	\State Let $\theta \gets \lceil \frac{1}{\epsilon} \log \left(\frac{1}{\delta}\right)\rceil$ 
	\State Run GPTT($\epsilon_1 = \epsilon$, $\epsilon_2 = \infty$) on database $D$, queries ${\cal Q}$ and threshold $\theta$.
	\State $\forall v \in \dom$,  Let $larger(v)$ be the set $\{u \in \dom \ \mid\  \mbox{GPTT outputs $\top$ for } diff(x_u,x_v) \}$
	\State Construct {\em ordered} partition of the domain ${\cal P} = \{P_1, \ldots, P_p\}$, such that
	\Statex \hspace{5mm} $\forall u, v \in P_i$, $larger(u) = larger(v)$, and
	\Statex \hspace{5mm} $\forall u \in P_i, v \in P_{i+1}$, $larger(v) \subsetneq larger(u)$
	\State \Return $P$
\end{algorithmic}
\label{alg:attack}
\end{algorithm}

Our attack algorithm is defined in Algorithm~\ref{alg:attack}. Given the input dataset $D$ with domain $\dom$, we apply GPTT $(\epsilon_2 = \infty)$ to the set of all difference queries using all pairs of domain elements $u, v \in \dom$. We use a threshold $\theta = \lceil \frac{1}{\epsilon} \log \left(\frac{1}{\delta}\right)\rceil$. Pairs of domain elements $u, v \in \dom$ are grouped together if for all $w \in \dom$, GPTT outputs the same value for both $diff(u,w)$ and $diff(v,w)$. This results in a partitioning of the domain. Further, for every domain element $u$, we define $larger(u)$ to be the set of $v \in \dom$ such that GPTT output $\top$ for $diff(v, u)$. These are the domain elements that satisfy $x_v - x_u > \tilde{\theta}$, where $\tilde{\theta}$ is the noisy threshold. We order the partitions such that elements $u \in P_i$ have a bigger $larger(u)$ set than elements $v \in P_j$, for $j > i$. 

We can show that the ordered partitioning ${\cal P}$ imposes an ordering on the counts in the database $D$. 
\begin{lemma}
Let $D$ be a database on domain $\dom$. Let $P = \{P_1, \ldots, P_p\}$ be the ordered paritioning of $\dom$ output by Algorithm~\ref{alg:attack}. Then with probability at least $1-\delta$, for all $1 \leq \ell < m \leq p$, $u_i \in P_\ell, u_j \in P_m$, we have $x_i < x_j$.
\end{lemma}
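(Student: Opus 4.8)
The plan is to localize all of the randomness into the single high-probability event supplied by Theorem~\ref{theo:PTT}, and then show that, conditioned on that event, the claimed ordering of the counts is forced deterministically by the structure of the $larger(\cdot)$ sets. Concretely, in Algorithm~\ref{alg:attack} we run GPTT with $\Delta = 1$, $\epsilon_1 = \epsilon$, $\epsilon_2 = \infty$ and threshold $\theta = \lceil \frac{1}{\epsilon}\log(1/\delta)\rceil$. Since $\theta \ge \frac{1}{\epsilon}\log(1/\delta) = \alpha$, the argument in the proof of Theorem~\ref{theo:PTT} gives $\Pr[\,|\tilde{\theta} - \theta| < \alpha\,] \ge 1-\delta$, and as $\alpha \le \theta$ this implies that with probability at least $1-\delta$ the realized noisy threshold satisfies $\tilde{\theta} \in (0, 2\theta)$. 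Call this event $E$; everything below is argued on $E$, and in particular this is the only place probability enters.

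Next I would unwind the meaning of the GPTT output when $\epsilon_2 = \infty$: no noise is added to the queries, so for a difference query $diff(u_a, u_b) = x_a - x_b$ (sensitivity $1$), GPTT outputs $\top$ exactly when $x_a - x_b \ge \tilde{\theta}$. Hence $larger(u) = \{\, w \in \dom : x_w \ge x_u + \tilde{\theta}\,\}$, and on $E$ we also have $u \notin larger(u)$ since $\tilde{\theta} > 0$. Viewed as a function of the single real parameter $x_u$, the cutoff $x_u + \tilde{\theta}$ is monotonically non-decreasing, so $larger(u)$ is monotonically non-increasing; in particular $x_a \le x_b$ implies $larger(u_b) \subseteq larger(u_a)$. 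Therefore the family $\{\, larger(u) : u \in \dom\,\}$ is a chain under inclusion, which is exactly what makes the partition-construction step of Algorithm~\ref{alg:attack} well defined: the distinct sets among the $larger(u)$ can be listed as $L_1 \supsetneq L_2 \supsetneq \cdots \supsetneq L_p$ and $P_i := \{u : larger(u) = L_i\}$.

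With the chain structure in hand, the ordering claim drops out from a single witness element. Fix $1 \le \ell < m \le p$ and $u_i \in P_\ell$, $u_j \in P_m$. By the defining property of the ordered partition together with transitivity of $\supsetneq$ along $L_\ell \supsetneq L_{\ell+1} \supsetneq \cdots \supsetneq L_m$, we get $larger(u_j) = L_m \subsetneq L_\ell = larger(u_i)$, so there exists $w \in larger(u_i)\setminus larger(u_j)$. By the characterization of $larger$, membership $w \in larger(u_i)$ gives $x_w \ge x_i + \tilde{\theta}$, while $w \notin larger(u_j)$ gives $x_w < x_j + \tilde{\theta}$; combining, $x_i + \tilde{\theta} \le x_w < x_j + \tilde{\theta}$, hence $x_i < x_j$. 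This holds simultaneously for every such pair, and the whole argument was conditioned on $E$, which has probability at least $1-\delta$, completing the proof.

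I expect the only real subtlety — and the part worth spelling out carefully rather than the routine inequality manipulations — to be the verification that the $larger(u)$ sets genuinely form a totally ordered chain (so that the ``ordered partition'' in Algorithm~\ref{alg:attack} exists at all), and the observation that \emph{strict} inclusion, not merely inclusion, propagates from consecutive partitions to arbitrary pairs $\ell < m$; once that is established, extracting $x_i < x_j$ from a witness in the set difference is immediate. The probabilistic content is entirely inherited, verbatim, from Theorem~\ref{theo:PTT}.
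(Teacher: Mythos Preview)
Your proposal is correct and follows essentially the same route as the paper: establish $\tilde{\theta} > 0$ with probability at least $1-\delta$, use the strict containment $larger(u_j) \subsetneq larger(u_i)$ to extract a witness $w$ in the set difference, and read off $x_i < x_j$ from the two inequalities that witness yields. The only difference is that you are more careful than the paper---you explicitly verify that the $larger(\cdot)$ sets form a chain (so the ordered partition in Algorithm~\ref{alg:attack} is well defined) and spell out the transitivity of strict inclusion from consecutive to non-consecutive partitions---whereas the paper simply asserts $larger(u_j) \subsetneq larger(u_i)$ directly from the algorithm's construction.
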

\begin{proof}
Let $1 \leq \ell < m \leq p$, and let $\tilde{\theta}$ be the noisy threshold. Since $\theta = \lceil \frac{1}{\epsilon} \log \left(\frac{1}{\delta}\right)\rceil$, with probability at least $1-\delta$, $\tilde{\theta} > 0$. For any $u_i \in P_\ell$ and $u_j \in P_m$, $larger(u_j) \subsetneq larger(u_i)$. Therefore, there exists $u_k \in \dom$ such that $x_k - x_i  > \tilde{\theta}$, but $x_k - x_j \not> \tilde{\theta}$. Therefore, $x_i < x_j$.
\end{proof}

Let $S_i \subset \dom$ denote the set of domain elements that have count equal to $i$ in dataset $D$. It is easy to see that for every $S_i$ there is some $P \in {\cal P}$ output by Algorithm~\ref{alg:attack} such that $S_i \subseteq P$. We next show that for certain datasets there is an $m > 0$ such that the sets corresponding to small counts $0 \leq i \leq m$ are exactly reproduced in the partitioning output by Algorithm~\ref{alg:attack}. 

\begin{theorem}\label{thm:recon}
 Let ${\cal P} = \{P_0, P_1, \ldots, P_p\}$ be the ordered partition output by Algorithm~\ref{alg:attack} on $D$ with parameter $\epsilon$. Let $D$ be a dataset such that $S_i \neq \emptyset$ for all $i \in [0,k]$. That is, $D$ contains at least one domain element with count equal to $0,1,2,\ldots, k$. Let $\alpha = \lceil\frac{1}{\epsilon}\log\frac{1}{\delta}\rceil$. If $k > 2\alpha$ then with probability at least $1-\delta$, for all $i \in [0,m]$, $P_i = S_i$, where $m = k - 2\alpha$. 
\end{theorem}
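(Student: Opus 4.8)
The plan is to condition on the single event that the noise added to the threshold is small, and then show that on that event Algorithm~\ref{alg:attack}'s partition must reproduce the count-sets $S_0,\dots,S_m$ exactly; almost all of the work is in the exactness, i.e.\ in ruling out that some $P_i$ absorbs an element of a different count.

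First I would pin down what GPTT with $\epsilon_2=\infty$ does here: it draws a single noisy threshold $\tilde\theta=\theta+Lap(1/\epsilon)$ with $\theta=\alpha=\lceil\frac1\epsilon\log\frac1\delta\rceil$, and outputs $\top$ on $diff(x_u,x_v)$ exactly when $x_u-x_v\ge\tilde\theta$. Hence $larger(v)=\{u\in\dom:x_u\ge x_v+\tilde\theta\}$ depends on $v$ only through the count $x_v$, so two elements of equal count always fall in the same part, each $S_c$ lies inside a single part, and the parts of ${\cal P}$ are linearly ordered by the common count of their members (indeed $larger(v)\subsetneq larger(u)$ forces some $w$ with $x_u+\tilde\theta\le x_w<x_v+\tilde\theta$, hence $x_u<x_v$; this is the preceding Lemma, and holds for every $\tilde\theta$). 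Next, by the Laplace tail bound already used for Theorem~\ref{theo:PTT}, with probability at least $1-\delta$ we have $|\tilde\theta-\theta|\le\frac1\epsilon\log\frac1\delta\le\alpha=\theta$, so $0\le\tilde\theta\le2\alpha$; I condition on this event for the remainder.

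The core claim is that, on this event, for every $i\in[0,m]$ and every count $j\neq i$ occurring in $D$, $larger(S_i)\neq larger(S_j)$. To prove it, set $i'=\min(i,j)\le m$, $j'=\max(i,j)$, and $c^\star=\lceil i'+\tilde\theta\rceil$; then $i'+\tilde\theta\le c^\star<i'+\tilde\theta+1\le j'+\tilde\theta$, while $0\le i'+\tilde\theta\le m+2\alpha=k$, so $c^\star\in\{0,1,\dots,k\}$ and hence $S_{c^\star}\neq\emptyset$ by hypothesis. Any $w$ with $x_w=c^\star$ then satisfies $x_w-i'\ge\tilde\theta$ but $x_w-j'<\tilde\theta$, so $w$ belongs to the $larger$-set of a count-$i'$ element but not to that of a count-$j'$ element, which proves the claim. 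Given this, fix $i\in[0,m]$: $S_i\neq\emptyset$ (since $i\le k$) lies in a single part $P$, and if some $u\in P$ had $x_u\neq i$ then $x_u$ is a count occurring in $D$ with $larger(u)=larger(S_i)$, contradicting the claim; so $P=S_i$, i.e.\ each $S_i$ with $i\le m$ is itself a part of ${\cal P}$. Finally I would induct on $i\in[0,m]$, the base case $i=0$ being immediate ($0$ is the minimal count, so the part $S_0$ is $P_0$): for $i\ge1$, the members of $S_i$ have count $i$, so by the count-ordering of the parts they cannot lie in $P_0=S_0,\dots,P_{i-1}=S_{i-1}$ (counts $0,\dots,i-1$), nor in any later part $P_\ell$, $\ell>i$ (else the nonempty part $P_i$ would consist of elements of count strictly between $i-1$ and $i$); hence $S_i\subseteq P_i$, and as $S_i$ is a part and parts are disjoint, $S_i=P_i$.

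I expect the exactness in the last two steps to be the real obstacle: it is easy that $S_i\subseteq P_i$, but one must rule out $P_i$ swallowing an element of count $i\pm1$ --- or, in principle, of any other count, including counts exceeding $k$. This is precisely where the slack $m=k-2\alpha$ earns its keep: together with $\tilde\theta\le2\alpha$ it is exactly what forces the witness count $c^\star$ to be $\le k$, hence present in $D$, while lying in the half-open window $[\,i'+\tilde\theta,\ j'+\tilde\theta\,)$; the only fussy point is the ceiling/integrality bookkeeping that guarantees that window contains such an integer.
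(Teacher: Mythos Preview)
Your proof is correct and follows essentially the same route as the paper's: condition on the event $\tilde\theta\in[0,2\alpha]$, use the non-emptiness of $S_{\lceil i'+\tilde\theta\rceil}$ as a witness separating the $larger$-sets, and then appeal to the count-ordering of the parts to conclude $P_i=S_i$. Your treatment of the exactness step is in fact more careful than the paper's --- the paper only separates adjacent $S_i,S_{i+1}$ and then asserts $P_i=S_i$ directly, whereas you explicitly rule out $P_i$ absorbing an element of \emph{any} other occurring count $j\neq i$, which is the right thing to check.
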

\begin{proof}
Since $\theta = \alpha = \lceil\frac{1}{\epsilon}\log\frac{1}{\delta}\rceil$, with probability at least $1-\delta$, the noisy threshold $\tilde{\theta}$ will be within $[0,2\alpha]$. For the dataset $D$ such that $S_i \neq \emptyset$ for all $i \in [0,k]$, where $k >2\alpha$, for any $i \in [0,m-1]$, suppose  $u \in S_i$ and $v \in S_{i+1}$, then we have $larger(u) = \{z \mid x_z > \tilde{\theta} + i \}$ and $larger(v) = \{z \mid x_z > \tilde{\theta} + i + 1\}$.

Since $i \in [0,m]$, with probability at least $1-\delta$, $\tilde{\theta} + i \leq m + 2\alpha  < k$. Thus $\emptyset \neq S_{\lceil \tilde{\theta} + i \rceil} \subset larger(u) \setminus larger(v)$. So we have $larger(v) \subsetneq larger(u)$ and $S_i$,$S_{i+1}$ will not appear in the same $P_i \in {\cal P}$.

Furthermore, since we know $S_0, \dots S_m$ belong to separate $P_i$ and we have ${\cal P} = \{P_0, P_1, \ldots, P_p\}$ be the ordered partition. Thus, we have $P_{i} = S_{i}$ for $i \in [0,m]$.
\end{proof}

Theorem~\ref{thm:recon} shows that for datasets that have at least one domain element having a count equal to $i$ for all  $i \in [0,k]$, we can exactly tell the counts for those domain elements with count in $[0, k-2\alpha]$ with high probability.  A number of datasets satisfy the assumption that counts in $[0,k]$ all have support. For instance, for datasets that are drawn from a Zipfian distribution, the size of $S_i$ is in expectation inversely proportional to the count $i$, and thus all small counts will have support for datasets of sufficiently large size.

\begin{table}
\centering
\begin{tabular}[t]{|c|c|c|c|}
\hline
Datasets & Domain & Scale & k \\
\hline
Adult & 4096 & 17665 & 15  \\
\hline
MedicalCost & 4096 & 9415 & 26  \\
\hline
Income & 4096 & 20787122 & 98\\
\hline
HEPTH & 4096 & 347414 & 275 \\
\hline
\end{tabular}
\caption{Overview of some real world datasets}
\label{table:realdata}
\end{table}

We also find that a number of real world datasets satisfy the assumption that counts in $[0,k]$ all have support. Table~\ref{table:realdata} shows the features of some real world datasets. $\bold{Adult}$ is a histogram constructed from U.S. Census data \cite{Data:Adult} on the ``capitol loss'' attribute. $\bold{MedicalCost}$ is a histogram of personal medical expenses from the survey \cite{Data:MedCost}. $\bold{Income}$ is the histogram on ``personal income'' attribute from \cite{Data:Income}. $\bold{HEPTH}$ is a histogram constructed using the  citation network among high energy physics pre-prints on arXiv. The attributes Domain and Scale in Table~\ref{table:realdata} correspond to the size of $\dom$ and the number of tuples in the datasets. The feature $k$ for each dataset means that $S_i \neq \emptyset$ for all $i \in [0,k]$. 

However, the above attack assumes some prior knowledge about the dataset. Specifically, we assume that the attacker know $k$ such that  all counts in $[0,k]$ have support in the input dataset. Next we present an extension of our attack that allows reconstructing counts in the dataset without any prior knowledge about the dataset, but with differentially private access to the dataset.

\noindent{\bf Emperical Data Reconstruction}

\begin{algorithm}[t]
\caption{Reconstruct Algorithm}
Input: Dataset $D$ with domain $\dom$,  privacy param $\epsilon$\\
Output: A reconstruction of D
\begin{algorithmic}[1]
	\State Split the budget $\epsilon = \epsilon_1 + \epsilon_2$
	\State ${\cal P} \gets Attack~Algorithm(D, \epsilon_1)$
	\For {$P \in {\cal P}$}
		\State $\tilde{c}_P \gets count(P) + Lap(\frac{1}{\epsilon_2})$
		\State $a_P\gets \frac{\tilde{c}_P}{|P|}$
		\State We guess the count of cells in $P$ is $round(a_P)$
	\EndFor
\end{algorithmic}
\label{alg:recon}
\end{algorithm}

Algorithm~\ref{alg:recon} outlines an attack for reconstructing a dataset using GPTT and differentially private access to the dataset. Algorithm~\ref{alg:recon} also takes as input a privacy budget $\epsilon$. We split the budget $\epsilon = \epsilon_1 + \epsilon_2$. We use $\epsilon_1$ to run our GPTT based attack algorithm (Alg~\ref{alg:attack}), which outputs an ordered partition ${\cal P}$ of the domain. We use the remaining budget $\epsilon_2$ to compute noisy total counts $\tilde{c}_P$ for each partition $P \in {\cal P}$, and estimate the average count in each partition $a_P = \tilde{c}_P / |P|$.  We round this average to the nearest integer and guess that each domain element $u_i \in P$ has count $round(a_P)$.

\begin{table}
\centering
\begin{tabular}[h]{|c|c|c|c|}
\hline
 & $\epsilon=1.0$ & $\epsilon=0.5$ & $\epsilon=0.1$ \\
\hline
Adult &  0.994 & 0.991 &  0.981 \\
\hline
MedicalCost & 0.985 & 0.977 &  0.949  \\
\hline
Income & 0.798 & 0.741 & 0.636 \\
\hline
HEPTH & 0.904 & 0.795 &  0.477\\
\hline
\end{tabular}
\caption{Emperical datasets reconstruction on different $\epsilon$}
\label{table:recon}
\end{table}

We run Algorithm~\ref{alg:recon} on the four datasets. Table~\ref{table:recon} shows the fraction of domain elements in each dataset whose  counts are correctly guessed by our algorithm for $\epsilon \in \{1, 0.5, 0.1\}$. Each accuracy measure is the average of 10 repetitions. In each experiment, we set $\epsilon_1 = \epsilon_2 = 0.5\epsilon$. When $\epsilon$ is not small, the counts of most cells can be reconstructed (3 datasets can even be reconstructed  over 90$\%$ ). With the decreasing of the $\epsilon$, the ratio becomes smaller. Two reasons explain this result: (1) Small $\epsilon_1$ leads to a coarser partition from the Attack Algorithm. (2) Small $\epsilon_2$ introduces much noise to the counts of each group giving us wrong counts.

\begin{table}
\centering
\begin{tabular}[h]{|c|c|c|c|c|}
\hline
 & $\#$ of cells  &$\epsilon=1.0$ & $\epsilon=0.5$ & $\epsilon=0.1$ \\
\hline
Adult & 4062 & 0.999 & 0.997 &  0.992 \\
\hline
MedicalCost & 3878 & 1.0 & 1.0 &  0.960  \\
\hline
Income & 2369 & 1.0 & 1.0 & 0.979 \\
\hline
HEPTH & 1153 & 1.0 & 1.0 &  0.970\\
\hline
\end{tabular}
\caption{Emperical reconstruction accuracy on cells with small counts within $[0,5]$}
\label{table:recons}
\end{table}

Table~\ref{table:recons} displays the accuracy of reconstruction on domain elements with small counts within $[0,5]$. Note that more than $1/4^{th}$ of the domain has counts in $[0,5]$ for all the datasets. More than 95\% of all domain elements with small counts within $[0,5]$ can be reconstructed for all these 4 datasets under all settings of $\epsilon$ considered. Especially, when the $\epsilon$ is not small (e.g.,$\epsilon = 1.0$), nearly all these cells can be accurately reconstructed by using Algorithm~\ref{alg:recon}. 

\vspace{2ex}
\noindent{\bf Discussion:} These results show that not only does GPTT not satisfy differential privacy, it can lead to significant loss of privacy. Since  cells with small counts can be reconstructed with very high accuracy ($>95\%$), access to the data via GPTT can result in releasing query answers that can allow re-identification attacks. Hence, we believe that systems whose privacy stems from GPTT are not safe to use.

\eat{
on $D$ with all pairs of {\em difference} queries $Q$ and certain threshold $\theta$. Based on the output of GPTT, we make some domain cells into one group $G$, if for any $i, j \in G$ and any $k \in \dom$, queries $dev(x_i,x_k)$ and $dev(x_j,x_k)$ lead to the same output from GPTT. Then we add some partial order on the generated groups based on the output of GPTT. (e.g. If $dev(x_i,x_j) < \tilde{\theta}$, $dev(x_j,x_k)<\tilde{\theta}$ but $dev(x_i,x_k) > \tilde{\theta}$, then there is a partial order $x_i  \rightarrow x_j  \rightarrow x_k$).

We use $S_i$ to denote the set of cells with counts equal to $i$. Next, we show how much information we can get about the input datasets from the output of GPTT.

\begin{theorem}
\label{theo:recon}
For the dataset $D$, where $S_i \neq \emptyset$ for $i \in [0,k]$. Let $\theta = \alpha = \frac{1}{\epsilon_1}\log(\frac{1}{\delta})$ and  $k > 2\alpha$ and $m = \lfloor \tilde{\theta} \rfloor$, with probability at least $1 - \delta$, using Algorithm~\ref{alg:attack} can make us reconstruct $S_i$ for $i \in [0,k-m-1]$ if we have a very little knowledge. 

\begin{proof}
Let $a_i$ be any one instance from $S_i$, then we will have following results from GPTT:

$(0)$ $dev(x_{a_0}, x_{a_m})<\tilde{\theta}, dev(x_{a_0},x_{a_{m+1}})>\tilde{\theta}, \dots, dev(x_{a_0},x_{a_k})>\tilde{\theta}$

$(1)$ $dev(x_{a_1}, x_{a_{m+1}})<\tilde{\theta}, dev(x_{a_1},x_{a_{m+2}})>\tilde{\theta}, \dots, dev(x_{a_1},x_{a_k})>\tilde{\theta}$

$\dots$

$(k-m)$ $dev(x_{a_{k-m}},x_{a_k})<\tilde{\theta}$\\

Then for any $i \in [0,k-m-1]$, from the results $dev(x_{a_i},x_{a_{i+1+m}}) > \tilde{\theta}$ but $dev(x_{a_{i+1}},x_{a_{i+1+m}}) < \tilde{\theta}$, we know there is an partial order between $x_{a_i}$ and $x_{a_{i+1}}$. Furthermore, we can get the partial order $x_{a_0} \rightarrow x_{a_1} \rightarrow \dots \rightarrow x_{a_{k-m-1}}$.

On the other hand, for any two instances $p,q$ from the same $S_i$, they have the same counts. And for any $k \in \dom$, $dev(x_p,x_k) = dev(x_q, x_k)$ so that $p$ and $q$ will be put in one group in Algorithm~\ref{alg:attack}. Thus, the output of Algorithm~\ref{alg:attack} will contain the $k-m$ groups $G_1,\dots,G_{k-m}$ with partial order $G_1 \rightarrow \dots G_{k-m}$.

Therefore, if we have a very little knowledge like one instance in $G_1$ belongs to $S_0$, the we know that $G_i = S_{i-1}$ for $i \in [1,k-m]$

\end{proof}
\end{theorem}

\todo{Make the next corollary a theorem and use Lemma 2 ... }
\begin{corollary}
For the dataset $D$, where $S_i \neq \emptyset$ for $i \in [0,k]$. Let, $\theta = \alpha = \frac{1}{\epsilon_1}\log(\frac{1}{\delta})$. Then with probability at least $1-\delta$, using Algorithm~\ref{alg:attack} can make us (with a very little knowledge) reconstruct the active domain of $D$, if $k>2\alpha$.

\begin{proof}
Based on Theorem~\ref{theo:recon}, if $k>2\alpha$, we can reconstruct $S_i$ for $i \in [0,k-m-1]$, where $m = \lfloor \tilde{\theta} \rfloor$. Since with probability at least $1-\delta$, $\tilde{\theta} \in [0,2\alpha]$ and  $k - m - 1 \ge 0$, $S_0$ will be reconstructed if we have a little knowledge. Thus we get active domain.
\end{proof}
\end{corollary}

Now we are thinking about whether real datasets satisfy the assumptions in Theorem~\ref{theo:recon}. 
}

\section{Conclusion}\label{sec:conc}
We studied the privacy properties of a  variant of the sparse vector technique called generalized private threshold testing (GPTT). This technique is claimed to satisfy differential privacy and has impressive utility properties and has found applications in developing privacy preserving algorithms for frequent itemset mining, synthetic data generation and feature selection in machine learning. We show that the technique does not satisfy differential privacy. Moreover, we present attack algorithms that allows us to reconstruct counts from the input dataset (especially small counts) with high accuracy with no prior knowledge about the dataset. Thus, we demonstrate that GPTT is not a safe technique to use on datasets with privacy concerns.

\newpage
\bibliographystyle{abbrv}
\bibliography{refs}

\end{document}